\def\beq{\begin{equation}}
\def\eeq{\end{equation}}
\def\bea{\begin{eqnarray}}
\def\eea{\end{eqnarray}}
\newtheorem{theorem}{Theorem}
\newtheorem{definition}{Definition}
\newtheorem{corollary}{Corollary}
\let\expandafter
\def\subeqnarray{\arraycolsep1pt
   \def\@eqnnum\stepcounter##1{\stepcounter{subequation}
       {\reset@font\rm(\theequation\alph{subequation})}}
\jot5mm     \eqnarray}
\def\epsilon{\varepsilon}
\def\t{\widetilde}
\def\nn{\nonumber}
\def\h{\widehat}
\def\endpf{\hfill$\square$\medskip}
\newcommand{\CT}{{\mathscr{T}}}
\newbox\meibox
\def\placeunder#1#2#3#4{\setbox\meibox%
\vbox{\hbox{\hskip#4$\hphantom{#2}$}\hbox{$\hphantom{#1}$}}%
\vtop{\baselineskip=0pt\lineskiplimit=\baselineskip%
\lineskip=#3\hbox to \wd\meibox{\hfil\hskip#4$#2$\hfil}%
\hbox to \wd\meibox{\hfil$#1$\hfil}}}
\begin{document}
\title[Spherical geometry and integrable systems]
{Spherical geometry and integrable systems}

\author{Matteo Petrera \and Yuri B. Suris }

\thanks{E-mail: {\tt  petrera@math.tu-berlin.de, suris@math.tu-berlin.de}}

\maketitle

\begin{center}
{\footnotesize{
Institut f\"ur Mathematik, MA 7-2\\
Technische Universit\"at Berlin, Str. des 17. Juni 136,
10623 Berlin, Germany
}}
\end{center}

\begin{abstract}
We prove that the cosine law for spherical triangles and spherical tetrahedra defines integrable systems, both in the sense of multidimensional consistency and in the sense of dynamical systems.
\end{abstract}


\section{Introduction}
\label{sect: intro}

Nowadays, it is a well accepted fact that many fundamental notions and results of differential geometry are closely related with the theory of integrable systems. Moreover, the recent development of discrete differential geometry \cite{DDG} led to uncovering and identifying the true roots of such relations in the most fundamental incidence theorems of projective geometry and its metric subgeometries. On the other hand, this development led also to a clarification of the notion of integrability, which is conceptually much more clear at the level of discrete systems. In particular, the notion of multidimensional consistency as integrability crystallized in the course of this development (see \cite{DDG}).

In the present paper, we offer a new evidence in favor of the thesis that the very basic structures of geometry are integrable in their nature. We show that the cosine laws for spherical simplices define integrable systems in several senses. 

Our main results are as follows:

\smallskip

\begin{itemize}

\item We show that the map sending planar angles of a spherical triangle to its side lengths is integrable in the sense of multidimensional consistency (Sect. \ref{sect: dDarboux}). Moreover, we identify this map with the symmetric reduction of the well known discrete Darboux system, which appeared for the first time in the context of discrete conjugate nets. We also discuss the relation of this result with a geometric realization of tetrahedral angle structures, due to Feng Luo \cite{Luo2}.

\smallskip

\item We show that the same map, interpreted as a dynamical system, is completely integrable in the Liouville-Arnold sense (Sect. \ref{sect: dEuler}). This map can be considered as a non-rational discretization of the famous Euler top, which is a basic example of integrable system. Moreover, we identify the second iterate of this map as a birational discretization of the Euler top, introduced by Hirota and Kimura \cite{HK} and studied in detail in our previous publication \cite{PS}. This second iterate is shown to admit a nice geometric interpretation as a particularly simple and nice map between spherical triangles, which we call {\it{switch}} (the angles of the switched triangle are set to be equal to the sides of the original one). A closely related involutive map between spherical triangles was studied in a paper by Jonas \cite{J},  which can be thus considered as an early instance of appearance of integrable systems in elementary geometry.

\smallskip

\item We show that the map sending dihedral angles of a spherical tetrahedron to its side lengths is a further example of an integrable dynamical system (Sect. \ref{sect: dcoupled Euler}). For this map, we establish conserved quantities, which turn out to be a re-interpretation of the sine laws, and an invariant volume form. The latter is shown to be closely related, via the famous Schl\"afli formula, to the determinant of the Hesse matrix of the volume of the spherical tetrahedron as a function of dihedral angles.
\end{itemize}
For convenience of the reader, we recall (in Sect. \ref{sect: simplices}) the standard derivation of the spherical cosine and sine laws via Gram matrices.

\section{Cosine law for spherical simplices}
\label{sect: simplices}

The maps we are dealing with in the present paper have their origin in the cosine law for spherical simplices. We  understand a spherical $(n-1)$-simplex as an intersection of $n$ hemispheres in $\mathbb S^{n-1}$ in general position. Let $v_1,\ldots,v_n\in\mathbb S^{n-1}$ be the vertices of a spherical $(n-1)$-simplex, and let $v_1^*,\ldots,v_n^*\in\mathbb S^{n-1}$ be the vertices of the polar simplex, defined by the conditions
\[
\langle v_i^*,v_j\rangle=0\quad {\rm if}\;  i\neq j,\qquad \langle v_i^*,v_i\rangle>0.
\]
The edge lengths $\ell_{ij}\in(0,\pi)$ and the dihedral angles $\alpha_{ij}\in(0,\pi)$ are given by
\[
\cos\ell_{ij}=\langle v_i,v_j\rangle,\qquad \cos\alpha_{ij}=-\langle v_i^*,v_j^*\rangle.
\]
Each set of data $\{\ell_{ij}\}$ or $\{\alpha_{ij}\}$ defines a spherical simplex uniquely, and the spherical cosine law gives a transition from one set of data to another. The most direct way to derive the cosine law is through the so called Gram matrices
\[
G=(-\cos\alpha_{ij})_{i,j=1}^n,\qquad G'=(\cos\ell_{ij})_{i,j=1}^n.
\]
According to Lemma 1.2 in \cite{Luo1}, Gram matrices of spherical simplices are characterized by the following properties: they are positive definite symmetric matrices with diagonal entries equal to 1. Introducing the matrices $V$ and $W$ whose columns are $v_i$, resp. $v_i^*$, we can write:
\[
W^{\rm T}W=G, \qquad V^{\rm T}V=G', \qquad V^{\rm T}W=D,
\]
where $D$ is a diagonal matrix with positive entries. There follows:
\beq\label{Grams rel}
G'=DG^{-1}D,
\eeq
from which the {\em cosine laws} follow immediately. They can be put as
\beq\label{cos gen}
\cos\ell_{ij}=\frac{g_{ij}}{\sqrt{g_{ii}g_{jj}}},\qquad \cos\alpha_{ij}=\frac{g'_{ij}}{\sqrt{g'_{ii}g'_{jj}}},
\eeq
where $g_{ij}$, $g'_{ij}$ are the cofactors of the Gram matrices $G$, $G'$, respectively, see, for instance, \cite{V}. We now consider in more detail the particular cases $n=3,4$ of this construction.

\subsection{Spherical triangles}

For a spherical triangle, $\alpha_{ij}$ is just the plane (inner) angle opposite to the edge $\ell_{ij}$,  see Fig.~\ref{Fig: triangle}.

\begin{figure}[h!]
\includegraphics[height=5.5cm]{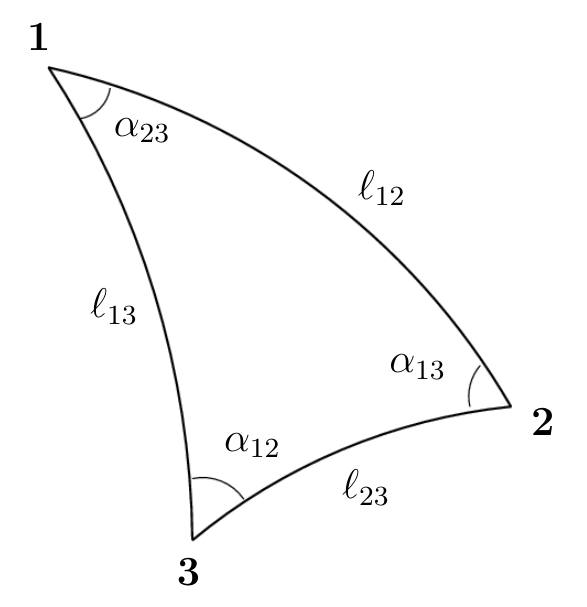}
\caption{A spherical triangle.}
\label{Fig: triangle}
\end{figure}

The angles of a spherical triangle belong to the tetrahedron-shaped domain
\beq
\CT= \{\alpha=(\alpha_{12},\alpha_{13},\alpha_{23}) \in (0,\pi)^3 \, : \, \alpha_{12}+\alpha_{13}+ \alpha_{23} > \pi,  \;-\alpha_{ij} + \alpha_{ik} + \alpha_{jk} < \pi\}. \nn
\eeq
Here and below we use $(i,j,k)$ to denote any permutation of (1,2,3). The sides belong to the domain
\beq
\CT^*= \{\ell=(\ell_{12},\ell_{13},\ell_{23}) \in (0,\pi)^3 \, : \, \ell_{12}+\ell_{13}+ \ell_{23} < 2 \pi,  \;
-\ell_{ij} + \ell_{ik} + \ell_{jk}>0\},
\nn
\eeq
which can be obtained from $\CT$ via the reflection $I:\CT\leftrightarrow\CT^*$, $I(\alpha)=\pi-\alpha$.
The space of spherical triangles (up to orthogonal transformations of $\mathbb R^3$) is parametrized by either of $\CT$ or $\CT^*$: a spherical triangle exists if and only if $\alpha \in \CT$, resp. if and only if $\ell\in\CT^*$.

The {\em cosine laws} for spherical triangles, as given in (\ref{cos gen}), read:
\bea
&& \cos\ell_{ij}=  \frac{\cos\alpha_{ij} +\cos\alpha_{ik}\cos\alpha_{jk}}
{ \sin \alpha_{ik} \sin \alpha_{jk}}, \label{c1} \\
&&\cos\alpha_{ij}=  \frac{\cos\ell_{ij} -\cos\ell_{ik}\cos\ell_{jk}}
{ \sin \ell_{ik} \sin \ell_{jk}}. \label{c2}
\eea
We will consider formulas (\ref{c1}--\ref{c2}) as defining the mutually inverse maps
\beq \nn
\begin{array}{rcl}
\Phi: \CT & \to     & \CT^*, \\
\alpha    & \mapsto & \ell,
\end{array}\qquad
\begin{array}{rcl}
\Phi^{-1}: \CT^* & \to     & \CT, \\
 \ell   & \mapsto & \alpha.
\end{array}
\eeq
We note that $\Phi^{-1}=I\circ\Phi\circ I$.

The cosine law can be supplemented by the {\em sine law} for spherical triangles:
\beq\label{si}
\frac{\sin\ell_{12}}{\sin\alpha_{12}}=\frac{\sin\ell_{13}}{\sin\alpha_{13}}=\frac{\sin\ell_{23}}{\sin\alpha_{23}}=
\frac{\sqrt{d}}{\gamma}=\frac{\gamma'}{\sqrt{d'}},
\eeq
where
\bea 
\nn
 d & = & \det G\ =\ 1-\cos^2\alpha_{12}-\cos^2\alpha_{13}-\cos^2\alpha_{23}-2\cos\alpha_{12}\cos\alpha_{13}\cos\alpha_{23},\\
 d' & = & \det G'\ =\ 1-\cos^2\ell_{12}-\cos^2\ell_{13}-\cos^2\ell_{23}+2\cos\ell_{12}\cos\ell_{13}\cos\ell_{23}, \nn
\eea
and
$$
\gamma= \sin\alpha_{12}\sin\alpha_{13}\sin\alpha_{23},\qquad 
\gamma' = \sin\ell_{12}\sin\ell_{13}\sin\ell_{23}.\nn
$$
The sine law is derived by considering principal $2\times 2$ minors of the relation (\ref{Grams rel}).

Upon the change of variables
\[
x_{ij}=\cos\alpha_{ij},\quad y_{ij}=\cos\ell_{ij},
\]
the cosine law for spherical triangles takes the form
\bea
&& y_{ij}=  \frac{x_{ij}+x_{ik}x_{jk}}
{\sqrt{1-x_{ik}^2}\sqrt{1-x_{jk}^2}}, \label{c1 rat} \\
&&x_{ij}=  \frac{y_{ij}-y_{ik}y_{jk}}
{ \sqrt{1-y_{ik}^2}\sqrt{1-y_{jk}^2}}.\label{c2 rat}
\eea
We denote these maps by
$$
\begin{array}{rcl}
\phi: \tau & \to     & \tau^*, \\
x    & \mapsto & y,
\end{array}\qquad
\begin{array}{rcl}
\phi^{-1}: \tau^* & \to     & \tau, \\
 y   & \mapsto & x,
\end{array}
$$
and call them the algebraic forms of $\Phi, \Phi^{-1}$, respectively. They are defined on the corresponding subsets $\tau,\tau^*$ of $[-1,1]^3$, and are related by $\phi^{-1}=i\circ \phi\circ i$, $i(x)=-x$. The sine law in these new variables reads:
\beq
\frac{1-y_{12}^2}{1-x_{12}^2} = \frac{1-y_{13}^2}{1-x_{13}^2}= \frac{1-y_{23}^2}{1-x_{23}^2}=
\frac{d}{\gamma^2}=\frac{(\gamma')^2}{d'},
\label{si rat}
\eeq
where
\beq\label{po rat 1}
d=1-x_{12}^2-x_{13}^2-x_{23}^2-2x_{12}x_{13}x_{23},\quad
d'=1-y_{12}^2-y_{13}^2-y_{23}^2+2y_{12}y_{13}y_{23},
\eeq
and
$$
\gamma^2=(1-x_{12}^2)(1-x_{13}^2)(1-x_{23}^2), \qquad  (\gamma')^2=(1-y_{12}^2)(1-y_{13}^2)(1-y_{23}^2).
$$
For further reference, we mention that the sine law can be represented in either of the following forms:
\beq \label{eq: sine law prod 1}
 d= (1-y_{ij}^2)(1-x_{ik}^2)(1-x_{jk}^2), \qquad
 d' = (1-x_{ij}^2)(1-y_{ik}^2)(1-y_{jk}^2). 
\eeq

\subsection{Spherical tetrahedra}
\label{subsect tetr}

For a spherical tetrahedron, see Fig.~\ref{Fig: sph tetr}, the {\it{cosine law}} is expressed by the formula (\ref{cos gen}) with
\bea
g_{ij} & = & x_{ij} + x_{ik}x_{jk} + x_{im}x_{jm} + x_{km}(x_{ik}x_{jm}+x_{im}x_{jk}-x_{ij}x_{km}),
\label{sph tetr c1}\\
g_{ii} & = & 1-(x_{jk}^2+x_{jm}^2+x_{km}^2)-2x_{jk}x_{jm}x_{km},
\label{sph tetr c2}
\eea
where we adopt the notation $x_{ij}=\cos\alpha_{ij}$ and $(i,j,k,m)$ stands for any permutation of $(1,2,3,4)$.

\setlength{\unitlength}{0.8cm}
\begin{figure}[http]
\includegraphics[height=9cm]{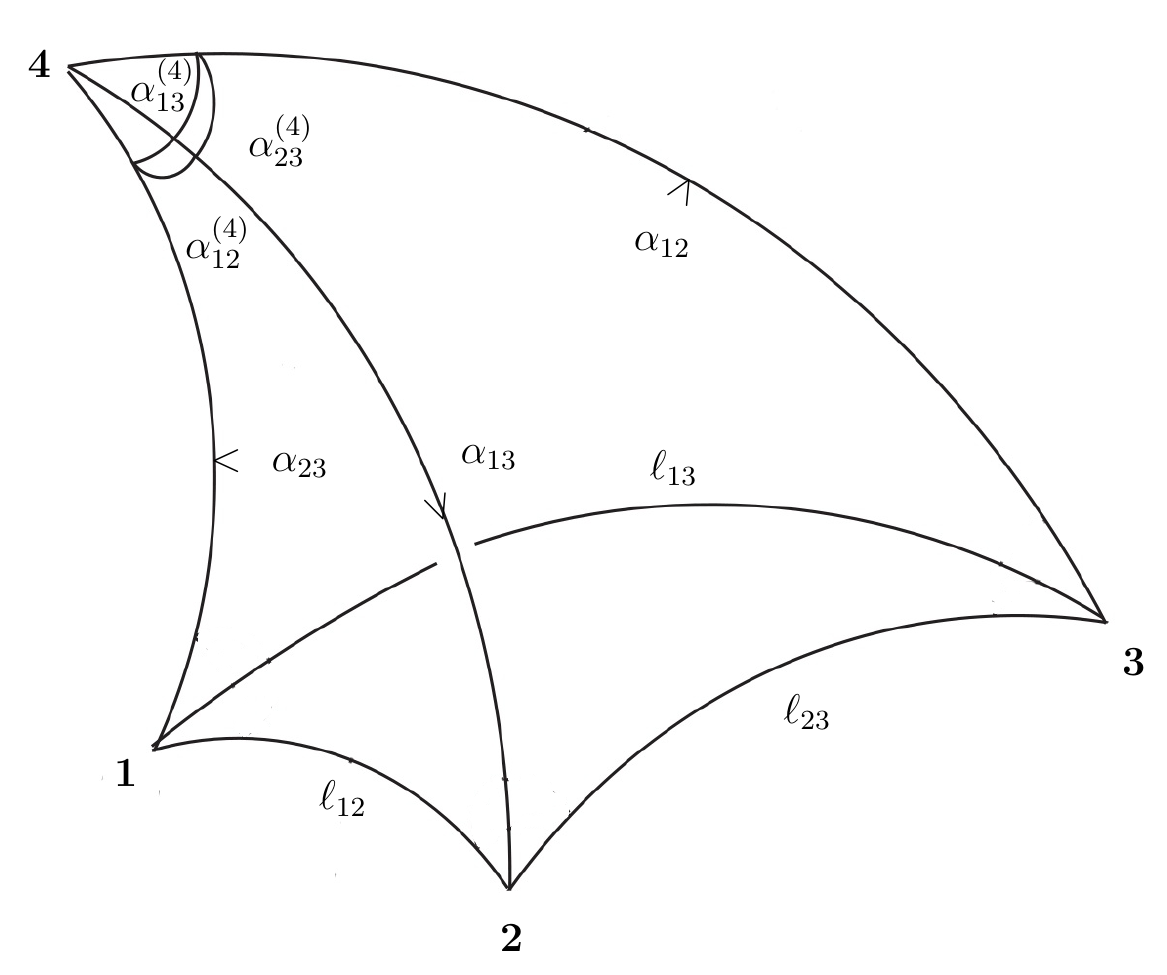}
\caption{A spherical tetrahedron}
\label{Fig: sph tetr}
\end{figure}

These formulae are supplemented (see \cite{DMP,KMP}) by the {\it{sine law}},  which can be put as
\bea
\frac{\sin \ell_{12} \sin \ell_{34}}{\sin \alpha_{12} \sin \alpha_{34}} =
\frac{\sin \ell_{13} \sin \ell_{24}}{\sin \alpha_{13} \sin \alpha_{24}} =
\frac{\sin \ell_{14} \sin \ell_{23}}{\sin \alpha_{14} \sin \alpha_{23}} =
\frac{d}{\gamma}=\frac{\gamma'}{d'}, \label{67}
\eea
and by the formulas
\bea
\frac{\cos \ell_{12} \cos \ell_{34}- \cos \ell_{13} \cos \ell_{24}}
{\cos \alpha_{12} \cos  \alpha_{34}- \cos  \alpha_{13} \cos  \alpha_{24}} &=&
\frac{\cos \ell_{13} \cos \ell_{24}- \cos \ell_{14} \cos \ell_{23}}
{\cos \alpha_{13} \cos  \alpha_{24}- \cos  \alpha_{14} \cos  \alpha_{23}} \nn \\
&=& \frac{\cos \ell_{14} \cos \ell_{23}- \cos \ell_{12} \cos \ell_{34}}
{\cos \alpha_{14} \cos  \alpha_{23}- \cos  \alpha_{12} \cos  \alpha_{34}} =
\frac{d}{\gamma}=\frac{\gamma'}{d'}, \label{68}
\eea
where $d=\det G$, $d'=\det G'$, and $\gamma= \sqrt{g_{11}g_{22}g_{33}g_{44}}$, $\gamma'= \sqrt{g'_{11}g'_{22}g'_{33}g'_{44}}$.

\section{Spherical triangles and symmetric discrete Darboux system}
\label{sect: dDarboux}

We now consider a somewhat fancy combinatorial interpretation of formulas (\ref{c1 rat}) for the cosine law for spherical triangles. In this interpretation, we combinatorially assign the three inner angles of a spherical triangle and its three side lengths to the six 2-faces of an elementary cube, in such a manner that each angle and an opposite side are assigned to two opposite faces of the cube. More precisely, we assign the quantities $x_{jk}$ to the three faces of a 3-dimensional cube parallel to the coordinate plane $jk$, and the quantities $y_{jk}=T_ix_{jk}$ to the three opposite faces, see Fig. \ref{Fig:cube eq faces}. Here, $T_i$ stands for the unit shift in the $i$-th coordinate direction.

\begin{figure}[htbp]
\begin{center}
\includegraphics[height=4.5cm]{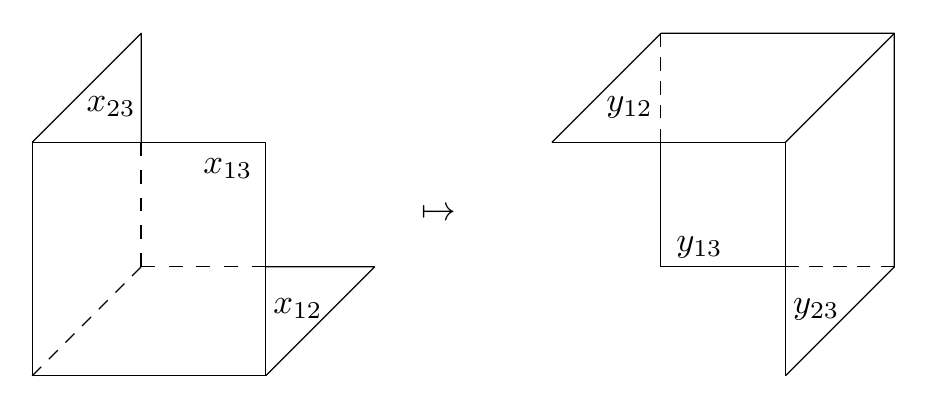}
\caption{A map on an elementary 3D cube with fields assigned to 2-faces. 
}
\label{Fig:cube eq faces}
\end{center}
\end{figure}

Thus, what we consider now is, in the terminology of \cite{DDG}, a 3D system with fields assigned to elementary squares, given by the formulas
$$
(T_3x_{12},T_2x_{13},T_1x_{23})=\phi\left(x_{12},x_{13},x_{23}\right),
$$
or, explicitly,
\beq\label{eq: dDarboux sym}
\left\{\begin{array}{l}
T_3x_{12}=\dfrac{x_{12}+x_{13}x_{23}}{\sqrt{1-x_{13}^2}\sqrt{1-x_{23}^2}},\\
T_2x_{13}=\dfrac{x_{13}+x_{12}x_{23}}{\sqrt{1-x_{12}^2}\sqrt{1-x_{23}^2}},\\
T_1x_{23}=\dfrac{x_{23}+x_{12}x_{13}}{\sqrt{1-x_{12}^2}\sqrt{1-x_{13}^2}}.
\end{array}\right.
\eeq
This 3D system will be called the {\em symmetric discrete Darboux system}. It can be seen as the special case (symmetric reduction) $x_{ij}=x_{ji}$ of the general {\em discrete Darboux system}, given by
\beq\label{eq: dDarboux}
T_kx_{ij}=\frac{x_{ij}+x_{ik}x_{kj}}{\sqrt{1-x_{ik}x_{ki}}\sqrt{1-x_{kj}x_{jk}}}.\\
\eeq
System (\ref{eq: dDarboux}) is well known in the theory of discrete integrable systems of geometric origin. It seems to have appeared for the first time in \cite[eq. (7.20)]{KS} as a gauge version of a simpler but less symmetric form of the discrete Darboux system,
\beq\label{eq: dDarboux alt}
T_kx_{ij}=\frac{x_{ij}+x_{ik}x_{kj}}{1-x_{kj}x_{jk}},\\
\eeq
derived earlier in \cite{BK}. Either form of the discrete Darboux system describes the so called rotation coefficients of discrete conjugate nets $f:\mathbb Z^3\to\mathbb R\mathbb P^n$, i.e., nets with planar elementary quadrilaterals, see \cite[p. 42]{DDG}. The version of rotation coefficients satisfying system (\ref{eq: dDarboux}) has an advantage of being defined in more local terms, see \cite[p. 95--96]{DDG}. Besides, system (\ref{eq: dDarboux}) admits, unlike system (\ref{eq: dDarboux alt}), the symmetric reduction $x_{ij}=x_{ji}$, and is therefore better suited for the description of the so called symmetric conjugate nets and discrete Egorov nets, see \cite{D}. The symmetric discrete Darboux system is a close relative of the {\em discrete CKP system}, see \cite{S}.

Thus, we have proved that the symmetric discrete Darboux system (\ref{eq: dDarboux sym}) admits an alternative interpretation in terms of spherical triangles. This interpretation leads also to a new proof of integrability of the symmetric discrete Darboux system. Let us recall that integrability of discrete 3D systems is synonymous with their 4D consistency  \cite{DDG} . The meaning of this notion can be described as follows. 

Consider the initial value problem with the data $x_{ij}=x_{ji}$, $i,j=1,2,3,4$, prescribed at six squares adjacent to one common vertex of the 4D cube. Then the application of a 3D map like (\ref{eq: dDarboux sym}) to the four 3D cubes adjacent to this vertex allows us to determine all $T_kx_{ij}$. At the second stage, the map is applied to the other four 3-faces of the 4D cube, with the result being all $T_m(T_kx_{ij})$ computed in two different ways (since each of the corresponding squares is shared by two 3-faces), see Fig.~\ref{Fig: 4D consistency}. Now, 4D consistency of the map means that  $T_m(T_kx_{ij})=T_k(T_mx_{ij})$ for any permutation $i,j,k,m$ of $1,2,3,4$ and for arbitrary initial data.

\begin{figure}[htbp]
\begin{center}
\includegraphics[height=6cm]{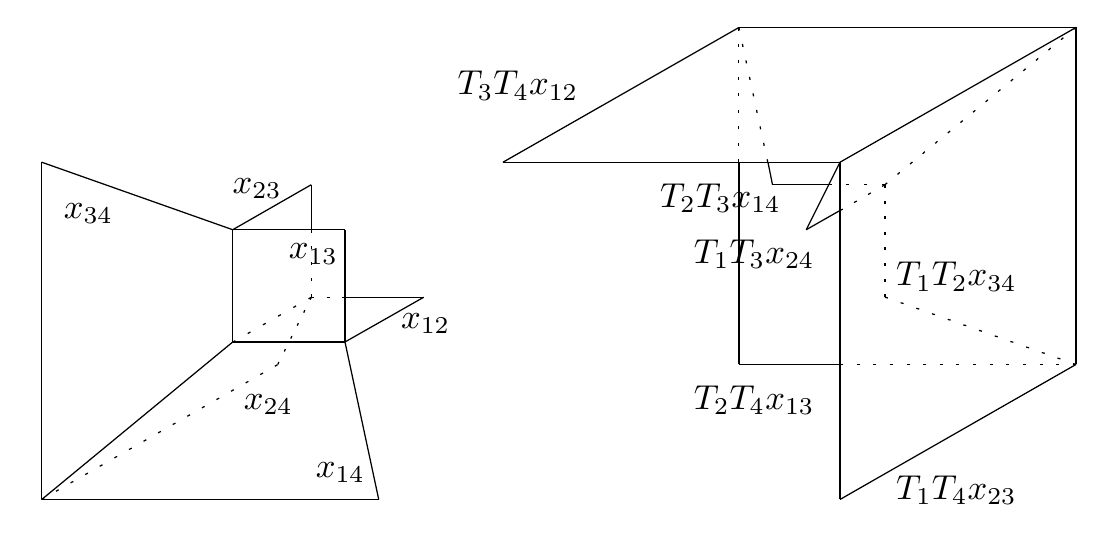}
\caption{Initial data and results of two-stage application of a 4D
consistent map with fields on 2-faces.}
\label{Fig: 4D consistency}
\end{center}
\end{figure}

\begin{theorem}\label{Th: 4D}
Symmetric discrete Darboux system (\ref{eq: dDarboux sym}) is 4D consistent.
\end{theorem}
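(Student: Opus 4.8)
The plan is to read off the geometric content of the fields and thereby reduce 4D consistency to the uniqueness of a spherical tetrahedron. On the natural phase space --- by Lemma 1.2 of \cite{Luo1}, the open set of six-tuples $x_{ij}=x_{ji}$, $i,j\in\{1,2,3,4\}$, for which the $4\times4$ matrix $G$ with $g_{ii}=1$, $g_{ij}=-x_{ij}$ is positive definite --- the numbers $x_{ij}$ are the cosines of the six dihedral angles of a spherical tetrahedron $\Sigma\subset\mathbb S^3$, with vertices $v_1,\dots,v_4$ and polar vertices $v_1^*,\dots,v_4^*$ satisfying $\langle v_i,v_j\rangle=\cos\ell_{ij}$, $-\langle v_i^*,v_j^*\rangle=x_{ij}$, and $\langle v_i,v_j^*\rangle=0$ for $i\neq j$, $\langle v_i,v_i^*\rangle>0$. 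I would aim to show that, for every permutation $(i,j,k,m)$ of $(1,2,3,4)$,
\[
T_m(T_kx_{ij})=T_k(T_mx_{ij})=\cos\ell_{ij}(\Sigma),
\]
where $\ell_{ij}(\Sigma)$ is the edge length of $\Sigma$. Since the right-hand side depends on the initial data alone, this is exactly the asserted 4D consistency.

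The heart of the argument is to identify the first application of the map. The $3$-cube spanned by the directions $i,j,k$ carries, by (\ref{eq: dDarboux sym}), the map $\phi$ applied to $(x_{ij},x_{ik},x_{jk})$, i.e. to the angle-cosines of the spherical triangle whose Gram matrix is the principal submatrix $G_{\{i,j,k\}}$ of $G$. This triangle is realized inside the $3$-plane $U=\mathrm{span}(v_i^*,v_j^*,v_k^*)$ with polar vertices $v_i^*,v_j^*,v_k^*$, and since $\Sigma$ is nondegenerate and $v_m\perp v_i^*,v_j^*,v_k^*$ one has $U=v_m^{\perp}$. Orthogonally projecting the vertex $v_i$ of $\Sigma$ onto $U$ yields $P_Uv_i=v_i-\cos\ell_{im}\,v_m$, a vector of length $\sin\ell_{im}$, orthogonal to $v_j^*$ and $v_k^*$, and with $\langle P_Uv_i,v_i^*\rangle=\langle v_i,v_i^*\rangle>0$; it therefore points along the triangle's vertex dual to $v_i^*$. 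A short computation of $\langle P_Uv_i,P_Uv_j\rangle/(\sin\ell_{im}\sin\ell_{jm})$ then gives
\[
T_kx_{ij}=\frac{\cos\ell_{ij}-\cos\ell_{im}\cos\ell_{jm}}{\sin\ell_{im}\sin\ell_{jm}},
\]
all lengths being edges of $\Sigma$. By the triangle cosine law (\ref{c2}), the right-hand side is the cosine of a plane angle of the spherical triangle with side lengths $\ell_{ij},\ell_{im},\ell_{jm}$ --- that is, of the face $ijm$ of $\Sigma$. Carrying this out also for $T_kx_{im}$ and $T_kx_{jm}$, one sees that the triple of fields produced at the vertex $e_k$, namely $(T_kx_{ij},T_kx_{im},T_kx_{jm})$, is $\phi^{-1}$ applied to $(\cos\ell_{ij},\cos\ell_{im},\cos\ell_{jm})$ --- the angle-cosines of the face $ijm$. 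Hence the second-stage map on the $3$-cube spanned by $i,j,m$ and based at $e_k$, which is again $\phi$, returns $\phi\circ\phi^{-1}(\cos\ell_{ij},\cos\ell_{im},\cos\ell_{jm})=(\cos\ell_{ij},\cos\ell_{im},\cos\ell_{jm})$; its first component is $T_m(T_kx_{ij})=\cos\ell_{ij}(\Sigma)$. Exchanging the roles of $k$ and $m$ yields the same value for $T_k(T_mx_{ij})$, and the theorem follows.

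I expect the main obstacle to be purely organizational rather than computational: keeping straight which sub-triangle and which face of $\Sigma$ each shifted field is attached to (above, this is encoded in the single label $ijm$), and checking that the orthogonal projection selects the vertex with the correct orientation, which is precisely the role of $\langle v_i,v_i^*\rangle>0$. For initial data outside the spherical domain but still in the domain of definition of the maps, I would extend the identity by analytic continuation, the two sides being algebraic and agreeing on a nonempty open set; the displayed formula for $T_kx_{ij}$ can alternatively be obtained directly from $G'=DG^{-1}D$ via the Desnanot--Jacobi and Jacobi identities for minors of $G$, which makes the continuation transparent. One could also bypass the geometry entirely --- (\ref{eq: dDarboux sym}) is the symmetric reduction $x_{ij}=x_{ji}$ of the discrete Darboux system (\ref{eq: dDarboux}), whose 4D consistency is classical, and this reduction is preserved by the dynamics since (\ref{eq: dDarboux}) is symmetric under $i\leftrightarrow j$ --- but the spherical argument is self-contained and, unlike the reduction, \emph{explains} the consistency.
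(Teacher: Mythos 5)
Your proof is correct and follows essentially the same route as the paper's: identify the six initial fields with the dihedral-angle cosines of a spherical tetrahedron, recognize the first-stage 3-cubes as the vertex links Lk($m$) (the great sphere $v_m^\perp\cap\mathbb S^3$) and the second-stage 3-cubes as the faces $\Delta(ijm)$, and conclude because both computations of $T_m T_k x_{ij}$ return the well-defined geometric quantity $\cos\ell_{ij}$ of the tetrahedron. The only differences are cosmetic: you verify the link identification by an explicit orthogonal-projection computation where the paper says ``one easily sees'', and you add the worthwhile remark that initial data outside the spherical domain are handled by analytic continuation.
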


{\it Proof.} We will show that this is just a corollary of the geometry of spherical tetrahedra. Consider a spherical tetrahedron in $\mathbb S^3$ with the edge lengths $\ell_{ij}$ (connecting the vertices $i$ and $j$) and dihedral angles $\alpha_{ij}$ (at the edges $(km)$, where $k,m$ is the complement of $i,j$ in $1,2,3,4$). One can take either $\alpha_{ij}$ or $\ell_{ij}$ as independent data parametrizing the spherical tetrahedron. The two sets of data are related by the cosine law for spherical tetrahedra, see section \ref{subsect tetr}. What seems to be less known is that this cosine law for spherical tetrahedra can be derived via a two-fold application of the cosine law for the spherical triangles, if the combinatorics of this two-fold application is in agreement with Fig.~\ref{Fig: 4D consistency}. In other words, a spherical tetrahedron combinatorially corresponds to a 4D cube.

\begin{figure}[htbp]
\includegraphics[height=5.5cm]{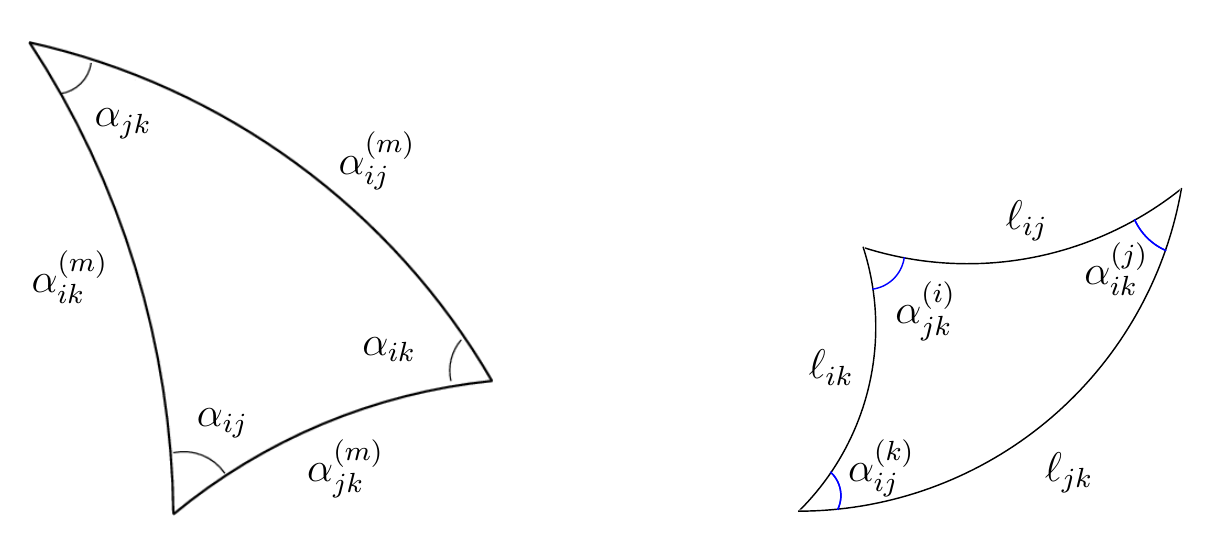}
\caption{Left: Sides and angles in the spherical triangle Lk($m$). Right: Sides and angles in the spherical triangle $\Delta(ijk)$}
\label{Fig5}
\end{figure}

To show this, we start with the dihedral angles $\alpha_{ij}$ as initial data. From these, we can determine all the planar angles $\alpha_{ij}^{(k)}$ (the latter is the notation for the planar angle opposite to the side $(ij)$ in the spherical triangle $\Delta(ijk)$ with the vertices $i,j,k$, see Fig.~\ref{Fig5} (right)). Indeed,  draw a great 2-sphere  $S_m\subset\mathbb S^3$ with the pole at the vertex $m$. It will cut the tetrahedron along a spherical triangle
which is called Lk($m$), the link of the vertex $m$ (see \cite{Luo3}). This triangle has, as one easily sees, the angles $\alpha_{ij}$, $\alpha_{ik}$ and $\alpha_{jk}$ and the side lengths $\alpha_{ij}^{(m)}$, $\alpha_{ik}^{(m)}$ and $\alpha_{jk}^{(m)}$, see Fig.~\ref{Fig5} (left). Therefore, we find:
\[
\big(\alpha_{ij}^{(m)},\alpha_{ik}^{(m)},\alpha_{jk}^{(m)}\big)=\Phi(\alpha_{ij},\alpha_{ik},\alpha_{jk}).
\]
In the above combinatorial interpretation, the dihedral angles $\alpha_{ij}$ are assigned to six 2-faces of the 4D cube, adjacent to one of its vertices. The four 3-faces (3D cubes) adjacent to this vertex, correspond combinatorially to the four spherical triangles Lk($m$). The result of application of our 3D system to these four 3D cubes can be written as $\alpha_{ij}^{(m)}$=$T_k\alpha_{ij}$.

At the second stage, we apply the cosine law to the four 
faces $\Delta(i,j,k)$ of the tetrahedron, see Fig.~\ref{Fig5} (right). In the above combinatorial interpretation, these four spherical triangles correspond to the four 3-faces of the cube adjacent to its opposite vertex. In particular, the side lengths $\ell_{ij}$ are assigned to the six 2-faces of the 4D cube, adjacent to this opposite vertex. We find:
\[
(\ell_{ij},\ell_{ik},\ell_{jk})=\Phi\big(\alpha_{ij}^{(k)},\alpha_{ik}^{(j)},\alpha_{jk}^{(i)}\big).
\]
This can be written as $\ell_{ij}=T_k\alpha_{ij}^{(k)}=T_k(T_m\alpha_{ij})$. Clearly, each edge is shared by two 2-faces, so we get two answers for each $\ell_{ij}$. Of course, they must coincide as definite geometric quantities in a given spherical tetrahedron. \qed
\medskip

The proof of Theorem \ref{Th: 4D} given above is completely free of computations and it is based on the existence of a spherical tetrahedron with given dihedral angles. We mention that very similar arguments were used in \cite{Luo2} in the opposite direction: the coincidence of the two answers for each $\ell_{ij}$ was demonstrated there by a direct computation and used to provide a geometric realization for a given angle structure.
\medskip

{\bf{Remark}}. In the combinatorial interpretation adopted in the present section, the map $\phi:\alpha\mapsto\ell$ is interpreted as solving a spherical triangle by its three planar angles, and combinatorially corresponds to one (directed) long diagonal of the cube. The inverse map $\phi^{-1}:\ell\mapsto\alpha$ would be interpreted as solving a spherical triangle by its three sides and would combinatorially correspond to the opposite orientation of this same long diagonal. It is easy to realize that other long diagonals of the cube would correspond to solving a spherical triangle by two sides and a planar angle enclosed by them, or by a side and two adjacent planar angles. The corresponding maps could be called {\em companion maps} to $\phi$. It is well known that companion maps for a multidimensionally consistent map with fields on 2-faces satisfy the so called {\em functional tetrahedron equation}, see, e.g., \cite[Sect. 6.16]{DDG}. We come to a conclusion that the cosine law for spherical triangles delivers an example of solutions of the functional tetrahedron equation. According to a private communication by R. Kashaev, he was aware of this example since many years.

\section{Spherical triangles and discrete time Euler top}
\label{sect: dEuler}

In the present section, it will be more convenient to change notation for the edges and angles of a spherical triangle to $(\ell_1,\ell_2,\ell_3)$ for $(\ell_{23},\ell_{13},\ell_{12})$, resp. to $(\alpha_1,\alpha_2,\alpha_3)$ for $(\alpha_{23},\alpha_{13},\alpha_{12})$. We study a dynamical system generated by iterations of the map $\phi$ from (\ref{c1 rat}), resp. its inverse map $\phi^{-1}$ from (\ref{c2 rat}), which now take the form
\beq
\phi: \;  y_{i}=  \frac{x_{i}+x_{j}x_{k}}
{\sqrt{1-x_{j}^2}\sqrt{1-x_{k}^2}},  \qquad \quad 
\phi^{-1}: \; x_{i}=  \frac{y_{i}-y_{j}y_{k}}
{ \sqrt{1-y_{j}^2}\sqrt{1-y_{k}^2}}.\label{c1 rat 1}
\eeq
Due to homogeneity, one can insert a small parameter $\epsilon$ into the map $\phi$ by a scaling transformation $x_i\mapsto \epsilon x_i$, $y_i\mapsto \epsilon y_i$ leading to
\beq\label{c1 rat eps}
y_{i}=  \frac{x_{i}+\epsilon x_{j}x_{k}}
{\sqrt{1-\epsilon^2x_j^2}\sqrt{1-\epsilon^2x_k^2}}.
\eeq
Geometrically, small values of $\epsilon$ in (\ref{c1 rat eps}) correspond to small values of $x_i=\cos\alpha_i$ in (\ref{c1 rat 1}), i.e., to spherical triangles with all three angles (and all three sides) being close to $\pi/2$. The form (\ref{c1 rat eps}) of the map $\phi$ makes it obvious that it can be considered as a time discretization of the famous {\em Euler top}, described by the following system of differential equations:
\beq\label{euler}
\dot x_i = x_j x_k.
\eeq
The latter system can be brought to the form $\dot x_i=c_ix_jx_k$, involving arbitrary real parameters $c_i$, by a simple scaling transformation $x_i\mapsto b_ix_i$, where, however, some of $b_i$ may be imaginary, depending on the signs of $c_i$.

It is well-known that system (\ref{euler}) is integrable in the Liouville-Arnold sense and can be solved in terms of elliptic functions, see for instance \cite{RSTS}. In particular, it is bi-Hamiltonian and admits integrals of motion $I_{ij}=x_i^2-x_j^2$, two of which are functionally independent.

We will show that discretization (\ref{c1 rat eps}) of the Euler top inherits integrability.

\begin{theorem}\label{Th 1}
The map $\phi$ admits integrals of motion
\beq\label{hy}
E_{ij}=\frac{1 -x_i^2}{1 - x_j^2},
\eeq
two of which are functionally independent. It has an invariant volume form given by
$$
\omega=\frac{dx_1\wedge dx_2\wedge dx_3}{\varphi(x)},
$$
where $\varphi(x)$ is any of the functions $(1 - x_j^2)(1-x_k^2)$ or $(1-x_i^2)^2$ (the quotient of any two such functions is an integral of motion).
\end{theorem}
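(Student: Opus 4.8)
The plan is to verify directly that the quantities $E_{ij}$ in \eqref{hy} are preserved by $\phi$, using the sine law in its multiplicative form \eqref{eq: sine law prod 1}. Recall that the sine law \eqref{si rat} asserts that the three ratios $(1-y_i^2)/(1-x_i^2)$ are all equal (to $d/\gamma^2$); denote this common value by $\mu = \mu(x)$. Then
$$
E_{ij}\circ\phi = \frac{1-y_i^2}{1-y_j^2} = \frac{\mu(1-x_i^2)}{\mu(1-x_j^2)} = \frac{1-x_i^2}{1-x_j^2} = E_{ij},
$$
so each $E_{ij}$ is an integral of motion. Since $E_{12}E_{23}=E_{13}$, there are exactly two functionally independent ones among $E_{12}, E_{13}, E_{23}$; functional independence of, say, $E_{12}$ and $E_{23}$ on an open subset of $\tau$ is immediate from their gradients (one involves $dx_1, dx_2$ only, the other $dx_2, dx_3$ only, and neither is degenerate). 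The only thing requiring care here is to note that the sine law as stated is an identity that holds wherever $\phi$ is defined, so the cancellation of $\mu$ is legitimate; this is where I would be slightly careful about the domain, but it is not a real obstacle.

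For the invariant volume form, I would compute the Jacobian of $\phi$ and show that
$$
\det\frac{\partial(y_1,y_2,y_3)}{\partial(x_1,x_2,x_3)} = \frac{\varphi(y)}{\varphi(x)}
$$
for a suitable choice of $\varphi$, which is exactly the statement that $\omega = dx_1\wedge dx_2\wedge dx_3/\varphi(x)$ is invariant. The natural guess, consistent with the structure of \eqref{eq: sine law prod 1}, is to take $\varphi(x) = (1-x_j^2)(1-x_k^2)$ for a fixed index pair, or $\varphi(x)=(1-x_i^2)^2$; that these choices differ by an integral of motion is clear since their quotient is a ratio of products of $1-x_\ell^2$, hence a product of powers of the $E_{ij}$, which I have just shown to be invariant. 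So it suffices to establish invariance for one such $\varphi$, and then transport it to the others. I would set this up by writing $y_i = (x_i + x_jx_k)/\sqrt{(1-x_j^2)(1-x_k^2)}$ and differentiating; the entries of the Jacobian matrix are rational in the $x_\ell$ and the $\sqrt{1-x_\ell^2}$, and the determinant should collapse after using the sine law identities.

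The main obstacle I anticipate is the Jacobian determinant computation: it is a $3\times 3$ determinant whose entries are somewhat messy (each $\partial y_i/\partial x_\ell$ has a different denominator structure depending on whether $\ell=i$ or $\ell\in\{j,k\}$), and naive expansion produces many terms. The way I would tame it is to exploit the symmetry of the problem and the already-known integrals: factor out the obvious common denominators row by row, reducing the determinant to a polynomial identity in the $x_\ell$; then recognize that the resulting polynomial must be divisible by the expected factors (the $1-x_\ell^2$ appearing in $\varphi(x)$ and $\varphi(y)$, the latter re-expressed via the sine law), and pin down the remaining constant by evaluating at a convenient point such as $x=0$, where $\phi$ is the identity and the Jacobian is $1$. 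An alternative, cleaner route — which I would try first — is to use the factorization $\phi = (\text{second iterate's square root})$ is not available, but the substitution $x_i = \cos\alpha_i$ turns $\phi$ into the geometric map $\Phi$, and one can instead look for the invariant form intrinsically: the quantity $d/\gamma^2 = \mu$ and the relation \eqref{eq: sine law prod 1} suggest that $\prod_i \sqrt{1-x_i^2}$ transforms by a computable factor, which would give the volume form almost for free. Either way, once the Jacobian identity is in hand the theorem follows.
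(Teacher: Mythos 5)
Your treatment of the integrals of motion is complete and coincides with the paper's argument: the sine law (\ref{si rat}) states that the three ratios $(1-y_i^2)/(1-x_i^2)$ are equal, so each $E_{ij}$ is preserved; the paper does not even spell out the functional-independence check you add, and your observation that the various densities $\varphi$ differ by products of powers of the $E_{ij}$ is correct.

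The gap is in the volume-form part. You correctly reduce the claim to the Jacobian identity $\det(\partial y/\partial x)=\varphi(y)/\varphi(x)$, but you never establish it; you only sketch two strategies, neither carried far enough to constitute a proof. The ``divisibility plus evaluation at $x=0$'' route is not sound as stated: the Jacobian entries contain the irrational factors $(1-x_\ell^2)^{\pm 1/2}$, so the determinant is not a polynomial, and even after clearing denominators the assertion that the result ``must be divisible by the expected factors'' with no extraneous factor is precisely the content of the identity to be proved --- a single point evaluation cannot pin it down without an a priori degree bound and a complete list of irreducible factors. The paper closes exactly this gap with one concrete device: writing
$\dfrac{\partial y_i}{\partial x_j}=\dfrac{y_k}{\sqrt{1-x_j^2}\sqrt{1-x_k^2}}\,\sqrt{\dfrac{1-x_i^2}{1-x_j^2}}$,
one sees that the last factor amounts to conjugating the Jacobi matrix by ${\rm diag}\bigl(\sqrt{1-x_i^2}\bigr)$, hence it does not affect the determinant, which therefore equals
$\dfrac{1}{\prod_\ell(1-x_\ell^2)}\det\left(\begin{smallmatrix}1&y_3&y_2\\ y_3&1&y_1\\ y_2&y_1&1\end{smallmatrix}\right)=\dfrac{d'}{\prod_\ell(1-x_\ell^2)}$;
then the product form (\ref{eq: sine law prod 1}) of the sine law, $d'=(1-x_i^2)(1-y_j^2)(1-y_k^2)$, yields $\det(\partial y/\partial x)=\dfrac{(1-y_j^2)(1-y_k^2)}{(1-x_j^2)(1-x_k^2)}$. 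You had (\ref{eq: sine law prod 1}) in hand; the missing idea is recognizing the reduced determinant as the Gram determinant $d'$ of the image triangle.
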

\begin{proof}
The first statement (about integrals of motion) follows immediately from the sine law (\ref{si rat}). As for the second statement (about an invariant volume form), a direct computation leads to the following expressions:
\[
\frac{\partial y_i}{\partial x_i}=\frac{1}{\sqrt{1-x_j^2}\sqrt{1-x_k^2}},
\]
and
\[
\frac{\partial y_i}{\partial x_j}=\frac{x_k+x_ix_j}{\sqrt{(1-x_j^2)^3}\sqrt{1-x_k^2}}=
\frac{y_k}{\sqrt{1-x_j^2}\sqrt{1-x_k^2}}\,\sqrt{\frac{1-x_i^2}{1-x_j^2}}.
\]
The second factor in the last formula can be interpreted as a conjugation of the Jacobi matrix $\partial y/\partial x$ by the diagonal matrix with entries $\sqrt{1-x_i^2}$, and is therefore irrelevant for the computation of its determinant, which is thus equal to
\[
\det\left(\frac{\partial y}{\partial x}\right)=\frac{1}{(1-x_1^2)(1-x_2^2)(1-x_3^2)}
\left|\begin{array}{ccc} 1 & y_3 & y_2\\ y_3 & 1 & y_1 \\ y_2 & y_1 & 1\end{array}\right|
=\frac{d'}{(1-x_1^2)(1-x_2^2)(1-x_3^2)}.
\]
According to the sine law (\ref{si rat}), the latter expression is equal to
\[
\det\left(\frac{\partial y}{\partial x}\right)=\frac{(1-y_j^2)(1-y_k^2)}{(1-x_j^2)(1-x_k^2)},
\]
which finishes the proof.
\end{proof}

\begin{corollary}\label{biHam}
The map $\phi$ is Poisson with respect to Poisson brackets (bi-vector fields) obtained by contraction of the tri-vector field $\varphi(x)\partial_1\wedge \partial _2\wedge \partial_3$ with any of the one-forms $dE_{ij}(x)$. Explicitly, a family of compatible invariant Poisson brackets is given by
$$
\left\{x_i,x_j\right\} = C_i x_k (1-x_j^2)-C_j x_k(1-x_i^2),
$$
where $C_1$, $C_2$, $C_3$ are arbitrary constants. Thus, the map $\phi$ is bi-Hamiltonian and is completely integrable in the Liouville-Arnold sense.
\end{corollary}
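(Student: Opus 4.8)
The plan is to obtain the announced brackets from the standard three-dimensional construction of a Poisson structure out of an invariant volume form and an invariant function, and then to read off the three required properties, namely the Jacobi identity, $\phi$-invariance, and the integrability consequences, from Theorem~\ref{Th 1}. That theorem supplies the invariant volume form $\omega=dx_1\wedge dx_2\wedge dx_3/\varphi(x)$, to which corresponds the tri-vector field $\Pi=\varphi(x)\,\partial_1\wedge\partial_2\wedge\partial_3$, together with the invariant functions $E_{ij}$. For a fixed admissible density $\varphi$ and a fixed integral $E_{ij}$ one sets $P:=\iota_{dE_{ij}}\Pi$, whose bracket is $\{f,g\}=\Pi(dE_{ij},df,dg)=\varphi\,\langle\nabla E_{ij},\nabla f\times\nabla g\rangle$, so that $\{x_a,x_b\}=\varphi\,\varepsilon_{abc}\,\partial_c E_{ij}$. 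A short computation, matching each $dE_{ij}$ with the admissible density that makes the result polynomial (equivalently, contracting $\Pi$ with the density $(1-x_i^2)(1-x_j^2)$ and the exact form $d\log E_{ij}$), identifies these building blocks, and hence all their linear combinations, with the family $\{x_i,x_j\}=C_ix_k(1-x_j^2)-C_jx_k(1-x_i^2)$ of the statement.

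For the Jacobi identity I would argue on two levels. Structurally, for \emph{any} function $I$ and any density $\psi$, the bivector $\iota_{dI}(\psi\,\partial_1\wedge\partial_2\wedge\partial_3)=\iota_{\psi\,dI}(\partial_1\wedge\partial_2\wedge\partial_3)$ is Poisson, since the vanishing of its Schouten square reduces in dimension three to the condition $\beta\wedge d\beta=0$ for $\beta=\psi\,dI$, and this holds because $dI$ occurs twice in $\psi\,dI\wedge d\psi\wedge dI$; hence each individual building block is already a genuine Poisson structure. That the general linear combination $\{\cdot,\cdot\}_{C_1,C_2,C_3}$ remains Poisson, which for linear combinations of Poisson bivectors is not automatic, is then the one unavoidable explicit verification: inserting the explicit formula, the cyclic sum $\sum_{\mathrm{cyc}}\{\{x_1,x_2\}_C,x_3\}_C$ collapses to $2x_1x_2x_3$ times a cyclic sum of terms of the shape $C_iC_j\bigl(x_j(1-x_j^2)-x_i(1-x_i^2)\bigr)$, which cancels term by term. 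In particular the family is a vector space of Poisson structures, so any two of its members are automatically compatible.

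Invariance I would deduce from the geometry of Theorem~\ref{Th 1}. A diffeomorphism preserves $\omega$ if and only if it preserves $\Pi$, both amounting to the relation $\det(D\phi)=(\varphi\circ\phi)/\varphi$ already recorded in the proof of Theorem~\ref{Th 1}; combined with $E_{ij}\circ\phi=E_{ij}$, which makes $dE_{ij}$ $\phi$-invariant, naturality of the interior product yields $\phi_*\bigl(\iota_{dE_{ij}}\Pi\bigr)=\iota_{dE_{ij}}\Pi$. Since $\phi_*$ acts linearly on bi-vector fields, every linear combination of the building blocks, that is, every bracket $\{\cdot,\cdot\}_C$, is preserved by $\phi$. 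One could instead verify $\phi_*\{\cdot,\cdot\}_C=\{\cdot,\cdot\}_C$ by a longer direct computation using the explicit Jacobian of $\phi$ from the proof of Theorem~\ref{Th 1}, but the structural argument is cleaner. Closedness of the family under linear combinations together with $\phi$-invariance of its members already exhibits $\phi$ as bi-Hamiltonian: it is Poisson with respect to the whole pencil, in particular with respect to two independent compatible structures of rank $2$.

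Finally, for Liouville-Arnold integrability, fix a generic $C$. The structure bivector $\bigl(\{x_2,x_3\}_C,\{x_3,x_1\}_C,\{x_1,x_2\}_C\bigr)$ is generically nonzero, so $\{\cdot,\cdot\}_C$ has rank $2$; its symplectic leaves are the two-dimensional level sets of a Casimir $K_C$, which is a suitable combination of the two functionally independent integrals $\log\frac{1-x_1^2}{1-x_2^2}$ and $\log\frac{1-x_2^2}{1-x_3^2}$ furnished by Theorem~\ref{Th 1} (for the bracket with $C_2=C_3=0$, for instance, $K_C=\log E_{23}$ exactly). The remaining independent integral restricts to each leaf as a non-constant $\phi$-invariant, and a symplectic map of a two-dimensional leaf carrying one integral is integrable in the Liouville-Arnold sense; the orbits then lie on the common level curves of the two integrals, consistently with the elliptic solvability inherited from the Euler top. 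The only step that genuinely requires computation rather than an appeal to Theorem~\ref{Th 1} is the polynomial identity behind the Jacobi property of the generic member $\{\cdot,\cdot\}_C$, equivalently the mutual compatibility of the three building-block Poisson structures; everything else is either a standard feature of the Nambu/Jacobian construction in dimension three or an immediate consequence of the invariant volume form and integrals already in hand.
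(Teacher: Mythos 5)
Your proposal is correct and follows exactly the construction that the paper itself indicates (the corollary is stated without a separate proof, as an immediate consequence of Theorem \ref{Th 1}): contract the invariant tri-vector field with the differentials of the invariant functions $E_{ij}$, choose the admissible density making each bracket polynomial, and verify Jacobi, compatibility, and $\phi$-invariance. One cosmetic slip: in the Jacobi computation the cyclic sum collapses to $-2x_1x_2x_3$ times a sum in which each monomial $C_aC_b(1-x_c^2)$ occurs twice with opposite signs (not terms of the shape $C_iC_j\bigl(x_j(1-x_j^2)-x_i(1-x_i^2)\bigr)$ as you wrote), but the term-by-term cancellation you assert does hold.
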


A somewhat unpleasant feature of the map $\phi$ is its non-rational character, due to the square roots in the denominators. A striking and unexpected fact is that the second iterate of $\phi$ is a rational (actually, a birational) map, which appeared previously in the literature on a different occasion.

\begin{theorem}\label{th dET=F^2}
On the set $\tau\cap \phi^{-1}(\tau)$, where the second iterate of $\phi$ is defined, this second iterate $\phi^2:x\mapsto\t x$ is a birational map given by
\beq\label{de}
\widetilde{x}_i=\dfrac{x_i+2 x_jx_k+ x_i(-x_i^2+x_j^2+x_k^2)}
{1-(x_1^2+x_2^2+x_3^2) -2x_1x_2x_3}.
\eeq
These functions are the (unique) solutions of the following system of equations:
\beq\label{deuler}
\t x_i - x_i = \t x_j x_k + x_j \t x_k,
\eeq
which constitute the Hirota-Kimura discretization of the Euler top \cite{HK}.
\end{theorem}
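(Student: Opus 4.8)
The plan is to verify directly that the composition $\phi^2$ is given by the stated rational formula and that this formula solves the Hirota--Kimura system~(\ref{deuler}). I would organize the computation around the sine law rather than brute-force substitution, since the square roots that appear in a naive composition of~(\ref{c1 rat 1}) with itself are exactly those controlled by~(\ref{si rat}).

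First I would set $y=\phi(x)$ and $\t x=\phi(y)$, so that $\t x_i = (y_i+y_jy_k)/(\sqrt{1-y_j^2}\sqrt{1-y_k^2})$. By the sine law in the form~(\ref{si rat}), the ratios $(1-y_j^2)/(1-x_j^2)$ are all equal to the common value $\kappa := d/\gamma^2$; hence $\sqrt{1-y_j^2}\sqrt{1-y_k^2} = \kappa\sqrt{1-x_j^2}\sqrt{1-x_k^2}$, and the pesky square roots in the denominator get replaced by $\kappa$ times the square roots already present in $x$. Substituting $y_i = (x_i+x_jx_k)/(\sqrt{1-x_j^2}\sqrt{1-x_k^2})$ and the analogous expressions for $y_j,y_k$ into the numerator $y_i+y_jy_k$, and then dividing by $\kappa\sqrt{1-x_j^2}\sqrt{1-x_k^2}$, all the $\sqrt{1-x_\bullet^2}$ factors cancel: one checks that $y_jy_k$ carries the factor $1/(\sqrt{1-x_i^2}\sqrt{1-x_j^2}\sqrt{1-x_k^2})\cdot\sqrt{1-x_i^2}$ — more carefully, $y_j = (x_j+x_ix_k)/(\sqrt{1-x_i^2}\sqrt{1-x_k^2})$ and $y_k = (x_k+x_ix_j)/(\sqrt{1-x_i^2}\sqrt{1-x_j^2})$, so $y_jy_k$ has denominator $(1-x_i^2)\sqrt{1-x_j^2}\sqrt{1-x_k^2}$ while $y_i$ has denominator $\sqrt{1-x_j^2}\sqrt{1-x_k^2}$; combining over the common denominator and dividing by $\kappa\sqrt{1-x_j^2}\sqrt{1-x_k^2}$ leaves a genuinely rational expression in $x$. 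Using $\kappa = d/\gamma^2$ with $\gamma^2 = (1-x_i^2)(1-x_j^2)(1-x_k^2)$ turns the denominator into $d = 1-(x_1^2+x_2^2+x_3^2)-2x_1x_2x_3$ (up to the explicit algebra in~(\ref{po rat 1})), which is precisely the denominator in~(\ref{de}). The numerator then simplifies, after expanding $(x_i+x_jx_k)(1-x_i^2) + (x_j+x_ix_k)(x_k+x_ix_j)$, to $x_i + 2x_jx_k + x_i(-x_i^2+x_j^2+x_k^2)$, matching~(\ref{de}).

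Next I would confirm that these $\t x_i$ satisfy~(\ref{deuler}), i.e. $\t x_i - x_i = \t x_j x_k + x_j \t x_k$. The cleanest route is to note that~(\ref{deuler}) is a \emph{linear} system for $(\t x_1,\t x_2,\t x_3)$ with coefficient matrix $\mathrm{Id} - M(x)$ where $M(x)_{ij}$ has the off-diagonal structure reading off $x_k$ from row $i$, column $j$; its determinant is exactly $d$, so for $x$ with $d\neq 0$ the system has the unique solution obtained by Cramer's rule, and that solution is~(\ref{de}). Thus it suffices to check that~(\ref{de}) satisfies one such linear relation — a short symmetric verification — and uniqueness is automatic. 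This simultaneously establishes birationality: $\phi^2$ is a ratio of polynomials with polynomial inverse, the inverse being $(\phi^{-1})^2$, which by the symmetry $\phi^{-1} = i\circ\phi\circ i$ is obtained from~(\ref{de}) by $x\mapsto -x$, again rational.

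The main obstacle I anticipate is purely bookkeeping: carrying out the numerator simplification while keeping track of which permutation $(i,j,k)$ each square-root factor belongs to, and making sure the cancellation of $\sqrt{1-x_\bullet^2}$ is complete and sign-consistent (the square roots are positive on $\tau$, so no branch ambiguity arises, but one must be careful that $\kappa>0$ there, which follows since $d>0$ and $\gamma^2>0$ on $\tau$). There is also the minor point of identifying the domain $\tau\cap\phi^{-1}(\tau)$ on which $d\neq 0$ so that both the sine-law substitution and Cramer's rule are valid; this is where the hypothesis "where the second iterate of $\phi$ is defined" does its work. None of this is deep, but it is the part most prone to error, so I would do the numerator algebra symbolically and then cross-check against~(\ref{deuler}) via the linear-system argument as an independent confirmation.
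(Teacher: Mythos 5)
Your proposal is correct, but it runs in the opposite direction from the paper's proof, and it leans on a different tool. The paper's trick is to write everything in terms of the \emph{midpoint} $y=\phi(x)$: since $x=\phi^{-1}(y)$ and $\t x=\phi(y)$ are given by $(y_i\mp y_jy_k)/(\sqrt{1-y_j^2}\sqrt{1-y_k^2})$ with identical denominators, the relation (\ref{deuler}) is verified in two lines of algebra in the $y$-variables, with all square roots cancelling automatically and with no appeal to the sine law; the explicit formula (\ref{de}) is then obtained afterwards by solving the linear system. You instead push forward from $x$: you compose $\phi$ with itself, invoke the sine law (\ref{si rat}) to replace $\sqrt{1-y_j^2}\sqrt{1-y_k^2}$ by $\kappa\sqrt{1-x_j^2}\sqrt{1-x_k^2}$ with $\kappa=d/\gamma^2$, and expand the numerator $(x_i+x_jx_k)(1-x_i^2)+(x_j+x_ix_k)(x_k+x_ix_j)$ to land on (\ref{de}) directly (your algebra here checks out, including the identification of the denominator $\kappa\gamma^2=d$), and only then verify (\ref{deuler}) as a polynomial identity, with uniqueness from $\det(\mathrm{Id}-M(x))=d\neq0$. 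What your route buys is a direct derivation of (\ref{de}) and a transparent explanation of why $d$ appears as its denominator (it is the Cramer determinant of (\ref{deuler})); what it costs is the extra input of the sine law and noticeably more expansion work than the paper's midpoint computation. Both arguments are complete and establish birationality correctly (yours via $(\phi^{-1})^2=i\circ\phi^2\circ i$).
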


{\it{Proof.}} We let $\phi(x)=y$, $\phi(y)=\t x$. According to formulas (\ref{c1 rat 1}), we have
\beq
 x_{i}=\frac{y_{i}-y_{j}y_{k}}{\sqrt{1-y_j^2}\sqrt{1-y_k^2}}, \qquad
 \t x_{i}=\frac{y_{i}+y_{j}y_{k}}{\sqrt{1-y_j^2}\sqrt{1-y_k^2}}. \label{he01 rat}
\eeq
Using (\ref{he01 rat}), we find, on one hand,
$$
\t x_i - x_i = \frac{2y_j y_k }{\sqrt{1-y_j^2}\sqrt{1-y_k^2}},
$$
and, on the other hand,
$$
\t x_j x_k + x_j \t x_k =\frac{(y_j+y_ky_i)(y_k-y_iy_j)+(y_j-y_ky_i)(y_k+y_iy_j)}{(1-y_i^2)\sqrt{1-y_j^2}\sqrt{1-y_k^2}}=
\frac{2 y_j y_k}{\sqrt{1-y_j^2}\sqrt{1-y_k^2}}.
$$
This proves equations (\ref{deuler}) for $\phi^2$. These equations are linear with respect to $\t{x}_1,\t{x}_2,\t{x}_3$, and thus they can be solved to give the explicit birational map (\ref{de}). \endpf

Actually, the integrability attributes of the map $\phi$ given in Theorem \ref{Th 1} and Corollary \ref{biHam} were found previously for the rational map $\phi^2$: integrals in \cite{HK} and the invariant volume form along with the bi-Hamiltonian structure in \cite{PS}. We refer also to \cite{HP,PPS1,PPS2} for our recent study on Hirota-Kimura-type discretizations.

While the map $\Phi$ relates the data within one spherical triangle, a natural interpretation of the map $\Phi^2$ is given by the following map between spherical triangles.

\begin{definition}
Let $\triangle$ be a spherical triangle with the angles $\alpha_i$ and the sides $\ell_i$. The {\em{switched triangle}} $\t \triangle=S(\triangle)$ is the spherical triangle  with inner angles
\beq\label{cond switch}
\t \alpha_i=\ell_i
\eeq
(whenever such a triangle exists). The switched triangle is defined up to a transformation from $O(3)$.
\end{definition}

Clearly, $S(\triangle)$ exists if and only if $\ell\in\CT\cap \Phi(\CT)$,
i.e., if and only if $\alpha\in\Phi^{-1}(\CT)\cap\CT$.
In this case, we have $\t\ell=\Phi^2(\alpha)$. Thus, the quantities $\t x_i=\cos\t \ell_i$ are related to $x_i=\cos\alpha_i$ by the Hirota-Kimura discrete time Euler top equations (\ref{de}) or (\ref{deuler}).
By abusing notations slightly, we write $(\t\alpha,\t\ell)=S(\alpha,\ell)$. By induction, we have:
$$
S^n(\alpha,\ell)=(\Phi^n(\alpha),\Phi^{n+1}(\alpha)).
$$
Therefore, $S^n(\triangle)$ is defined iff $\alpha\in \CT\cap \Phi^{-1}(\CT)\cap\ldots\cap\Phi^{-n}(\CT).$ For any $n$, the latter domain contains an open neighborhood of the point $(\pi/2,\pi/2,\pi/2)$.

When talking about the switched triangles to our colleagues, we have been often asked: ``But isn't this just the polar triangle?'' The answer is: ``Of course not, since the switch generates a non-trivial dynamical system, while the polarity is an involution''. Nevertheless, it turns out that these two transformations of spherical triangles are closely related. We denote the angles and the sides of the polar triangle $\triangle^*=P(\Delta)$ by $(\alpha^*, \ell^*)=P(\alpha,\ell)$. They are given by
\beq\label{cond polar}
\alpha_i^*=\pi-\ell_i, \qquad \ell_i^*=\pi-\alpha_i.
\eeq
Clearly, in the latter interpretation, $P$ maps $\CT\times\CT^*$ to itself. Comparing formulas (\ref{cond switch}) and (\ref{cond polar}), we are led to the following definition.

\begin{definition}
Let $\triangle$ be a spherical triangle with the angles $\alpha_i$ and the sides $\ell_i$. The {\em{side flip}} $\h \triangle=F(\triangle)$ is the spherical triangle with sides
\beq\label{cond flip}
\h \ell_i=\pi-\ell_i
\eeq
(whenever such a triangle exists). The side flip $F(\triangle)$ is defined up to a transformation from $O(3)$. The map $F$ is an involution (when considered on equivalence classes modulo $O(3)$).
\end{definition}

An immediate consequence of the above definition is the following claim.

\begin{theorem}\label{prop S=GP}
The switch of spherical triangles, considered as a map on equivalence classes modulo $O(3)$, can be represented as a composition of two involutions, namely as the side flip followed by the polarity transformation:
$
S=P\circ F.
$
\end{theorem}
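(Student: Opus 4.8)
The plan is to verify the identity $S=P\circ F$ by a direct computation on the angle/side data of a spherical triangle, invoking the basic fact recalled in Sect.~\ref{sect: simplices} that a spherical triangle is determined up to $O(3)$ by its triple of angles (equivalently, by its triple of sides). Thus it suffices to show that $P(F(\triangle))$ and $S(\triangle)$ have the same triple of angles; everything else is bookkeeping with the reflections $\alpha\mapsto\pi-\alpha$.

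First I would unwind the definitions. Let $\triangle$ have angles $\alpha=(\alpha_1,\alpha_2,\alpha_3)$ and sides $\ell=\Phi(\alpha)$. By the definition of the side flip~(\ref{cond flip}), the triangle $\widehat\triangle=F(\triangle)$ has sides $\widehat\ell_i=\pi-\ell_i$ (and angles $\widehat\alpha=\Phi^{-1}(\pi-\ell)$). Applying the polarity $P$ to $\widehat\triangle$ and using~(\ref{cond polar}), the triangle $P(F(\triangle))$ has inner angles equal to $\pi-\widehat\ell_i=\pi-(\pi-\ell_i)=\ell_i$. On the other hand, by the definition of the switch~(\ref{cond switch}), $S(\triangle)$ is by construction the spherical triangle whose inner angles are $\ell_i$. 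Since the angles determine the triangle up to $O(3)$, this immediately yields $P(F(\triangle))=S(\triangle)$ as equivalence classes modulo $O(3)$.

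It then remains only to check that the two maps have the same domain of definition and, as a consistency check, that the two triangles also share the same sides. For the domains: $S(\triangle)$ exists iff $\ell\in\CT$, whereas $F(\triangle)$ exists iff $\pi-\ell\in\CT^*$; since $\CT^*=I(\CT)$ with $I(\alpha)=\pi-\alpha$ an involution, the latter condition is again $\ell\in\CT$, and $P$ is defined on every spherical triangle, so $S$ and $P\circ F$ have exactly the same domain. For the sides: the sides of $P(F(\triangle))$ are $\pi-\widehat\alpha_i$ with $\widehat\alpha=\Phi^{-1}(\pi-\ell)$, while the sides of $S(\triangle)$ are $\Phi(\ell)$; using the relation $\Phi^{-1}=I\circ\Phi\circ I$ recorded just after~(\ref{c2}), one gets $\Phi^{-1}(\pi-\ell)=I\Phi I(\pi-\ell)=I\Phi(\ell)=\pi-\Phi(\ell)$, hence $\pi-\widehat\alpha=\Phi(\ell)$, in agreement. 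There is no genuine obstacle in this argument — no nontrivial computation is needed; the only points requiring care are the well-definedness of $S$, $P$, $F$ on $O(3)$-equivalence classes and the coincidence of domains, together with keeping track of the two reflections, and it is precisely this bookkeeping that makes the composition $P\circ F$ collapse onto $S$.
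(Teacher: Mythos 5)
Your argument is correct and is exactly the verification the paper has in mind: the authors state the theorem as ``an immediate consequence of the above definition'' without writing out a proof, and your computation $\pi-\widehat\ell_i=\ell_i$ together with the domain bookkeeping is precisely that immediate consequence made explicit. No issues.
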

The angles $\h \alpha_i$ of the side flipped triangle $\h\triangle=F(\triangle)$ are given by $\h\alpha=\pi-\Phi^2(\alpha)$. The quantities $x_i=\cos \alpha_i$ and $\h x_i = \cos \h \alpha_i$ satisfy
\beq\label{Jonas impl}
\h x_i+x_i = \h x_j x_k + x_j \h x_k,
\eeq
or, explicitly,
\beq\label{Jonas expl}
\h{x}_i=-\dfrac{x_i+2  x_jx_k+x_i(-x_i^2+x_j^2+x_k^2)}
{1-(x_1^2+x_2^2+x_3^2)-2x_1x_2x_3}.
\eeq
The map $f:x\mapsto\h x$ defined by (\ref{Jonas expl}) is a birational involution.

It is very easy to find integrals of this map. Indeed, from the defining relation (\ref{cond flip}) yielding $\cos\h \ell_i=-\cos\ell_i$ and the cosine law (\ref{c1}) there follows:
$$
\frac{\cos \h \alpha_i + \cos\h \alpha_j \cos \h\alpha_k}{\sin \h\alpha_j \sin\h \alpha_k}=-
\frac{\cos \alpha_i + \cos \alpha_j \cos \alpha_k}{\sin \alpha_j \sin \alpha_k}.
$$
As a consequence, one gets three functionally independent conserved quantities of the map (\ref{Jonas expl}):
$$
\cos^2\ell_i=\frac{(x_i+x_j x_k)^2}{(1-x_j^2)(1-x_k^2)} \qquad\Leftrightarrow\qquad \sin^2\ell_i=\frac{d}{(1-x_j^2)(1-x_k^2)},
$$
with the quantity $d$ defined by (\ref{po rat 1}). Note that the quotients
$
E_{ij}= \sin^2\ell_i / \sin^2\ell_j
$
depend only on $x_i^2$  rather than on $x_i$, and are therefore integrals of motion for the map $\phi^2=i\circ f$ (these are nothing but functions (\ref{hy})).

Map (\ref{Jonas expl}) appeared in an old paper \cite{J} by H.~Jonas. Actually, he considered a closely related {\em angle flip} $\triangle\mapsto \overline \triangle$ of spherical triangles, defined by the relation
$
\overline \alpha_i=\pi-\alpha_i,
$
and proved that the cosines of the side lengths of the spherical triangles $\triangle, \overline \triangle$ are related by
\begin{equation}\label{Jonas}
\overline x_i+x_i + \overline x_j x_k + x_j \overline x_k=0.
\end{equation}
The latter map is conjugated to (\ref{Jonas impl}) via $i:x\mapsto -x$. Jonas' results include integrals of the map (\ref{Jonas}) and its solution in terms of elliptic functions.

\section{Spherical tetrahedra and discretization of two linearly coupled Euler tops}
\label{sect: dcoupled Euler}

We now consider the algebraic form of the cosine law for spherical tetrahedra as a dynamical system. Recall that it is given by
\beq
y_{ij}=\frac{g_{ij}}{\sqrt{g_{ii}g_{jj}}},
\eeq
with the expressions for the polynomials $g_{ij}$ given in (\ref{sph tetr c1}--\ref{sph tetr c2}). This map, which we denote by $\psi$,  can be considered as a time discretization of the following system of ordinary differential equations:
\beq\label{two tops coupled}
\dot{x}_{ij}=x_{ik}x_{jk}+x_{im}x_{jm}.
\eeq
Here, recall, $(i,j,k,m)$ stands for an arbitrary permutation of $(1,2,3,4)$. This six-dimensional system turns out to consist of two linearly coupled copies of the Euler top. Namely, each group of three variables $p_{ij}=x_{ij}+x_{km}$ and $q_{ij}=x_{ij}-x_{km}$ with $ij=12,13,23$, satisfies its own copy of the Euler top equations:
\beq\label{two tops uncoupled}
\dot{p}_{ij}=p_{ik}p_{jk}, \qquad \dot{q}_{ij}=q_{ik}q_{jk}.
\eeq
As a consequence, equations (\ref{two tops coupled}) are integrable in the Liouville-Arnold sense, with four independent integrals of motion which can be chosen as
\begin{eqnarray*}
p_{12}^2-p_{13}^2=(x_{12}+x_{34})^2-(x_{13}+x_{24})^2, & \; & p_{13}^2-p_{23}^2=(x_{13}+x_{24})^2-(x_{23}+x_{14})^2, \\
q_{12}^2-q_{13}^2=(x_{12}-x_{34})^2-(x_{13}-x_{24})^2, & \; & q_{13}^2-q_{23}^2=(x_{13}-x_{24})^2-(x_{23}-x_{14})^2,
\end{eqnarray*}
or, alternatively, as
\begin{eqnarray*}
x_{12}^2+x_{34}^2-x_{13}^2-x_{24}^2, & \; & x_{13}^2+x_{24}^2-x_{23}^2-x_{14}^2, \\
x_{12}x_{34}-x_{13}x_{24}, & \; & x_{13}x_{24}-x_{23}x_{14}.
\end{eqnarray*}

\begin{theorem}\label{Th 2}
The map $\psi$ admits four independent integrals of motion
$$
\frac{(1-x_{13}^2)(1-x_{24}^2)}{(1-x_{12}^2)(1-x_{34}^2)},\qquad \frac{(1-x_{23}^2)(1-x_{14}^2)}{(1-x_{12}^2)(1-x_{34}^2)},
$$
$$
\frac{x_{12}x_{34}-x_{13}x_{24}}{\sqrt{(1-x_{12}^2)(1-x_{34}^2)}},\qquad \frac{x_{13}x_{24}-x_{23}x_{14}}{\sqrt{(1-x_{12}^2)(1-x_{34}^2)}}.
$$
It has an invariant volume form given by
$$
\omega=\frac{dx_{12}\wedge dx_{13}\wedge dx_{23}\wedge dx_{14}\wedge dx_{24}\wedge dx_{34}}{\varphi(x)},
$$
where $\varphi(x)$ is any of the functions $(1 - x_{ij}^2)^{5/2}(1 - x_{km}^2)^{5/2}$.
\end{theorem}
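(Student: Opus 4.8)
The plan is to mimic the proof of Theorem~\ref{Th 1}: the integrals should drop straight out of the tetrahedral sine laws, and the invariant density out of a Jacobian computation. \emph{Integrals.} I would rewrite (\ref{67}) and (\ref{68}) in the variables $x_{ij}=\cos\alpha_{ij}$, $y_{ij}=\cos\ell_{ij}$. Squaring (\ref{67}) gives
\[
\frac{(1-y_{12}^{2})(1-y_{34}^{2})}{(1-x_{12}^{2})(1-x_{34}^{2})}
=\frac{(1-y_{13}^{2})(1-y_{24}^{2})}{(1-x_{13}^{2})(1-x_{24}^{2})}
=\frac{(1-y_{14}^{2})(1-y_{23}^{2})}{(1-x_{14}^{2})(1-x_{23}^{2})}
=\left(\frac{d}{\gamma}\right)^{2},
\]
so the quotient of any two of the three fractions on the left equals $1$; written out, this is precisely the assertion that the first two functions in the statement are $\psi$-invariant. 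In the same way (\ref{68}) reads
\[
\frac{y_{12}y_{34}-y_{13}y_{24}}{x_{12}x_{34}-x_{13}x_{24}}
=\frac{y_{13}y_{24}-y_{14}y_{23}}{x_{13}x_{24}-x_{14}x_{23}}
=\frac{d}{\gamma},
\]
and dividing either numerator by $\sqrt{(1-y_{12}^{2})(1-y_{34}^{2})}$ and then using (\ref{67}) once more in the form $\sqrt{(1-y_{12}^{2})(1-y_{34}^{2})}=\frac{d}{\gamma}\sqrt{(1-x_{12}^{2})(1-x_{34}^{2})}$ shows that the last two functions are $\psi$-invariant as well. Functional independence of the four is a one-point Jacobian check, and is also transparent from the fact that in the continuous limit (e.g.\ after a homogeneity scaling $x\mapsto\epsilon x$) they reproduce, to leading order, the four integrals of (\ref{two tops coupled}).

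\emph{Volume form.} Since the ratio of any two admissible densities $\varphi(x)=(1-x_{ij}^{2})^{5/2}(1-x_{km}^{2})^{5/2}$ is a power of one of the integrals just found, it suffices to verify $\psi^{*}\omega=\omega$ for the single choice $\varphi(x)=(1-x_{12}^{2})^{5/2}(1-x_{34}^{2})^{5/2}$, i.e.\ to show
\[
\det\left(\frac{\partial y}{\partial x}\right)=\frac{\varphi(y)}{\varphi(x)}
=\left(\frac{(1-y_{12}^{2})(1-y_{34}^{2})}{(1-x_{12}^{2})(1-x_{34}^{2})}\right)^{5/2}=\left(\frac{d}{\gamma}\right)^{5},
\]
the last step by (\ref{67}). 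I would evaluate this $6\times6$ Jacobian using the structure of (\ref{sph tetr c1})--(\ref{sph tetr c2}): each $g_{ii}$ depends only on the three variables $x_{jk},x_{jm},x_{km}$ whose index pairs avoid $i$ (indeed $g_{ii}$ is the polynomial ``$d$'' of the link triangle $\mathrm{Lk}(i)$), and $\partial g_{ij}/\partial x_{ij}=1-x_{km}^{2}$, so that $\partial y_{ij}/\partial x_{pq}=(g_{ii}g_{jj})^{-1/2}\bigl(\partial_{pq}g_{ij}-\tfrac12 g_{ij}\,\partial_{pq}\log(g_{ii}g_{jj})\bigr)$, with diagonal entry $(1-x_{km}^{2})/\sqrt{g_{ii}g_{jj}}$. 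Factoring $(g_{ii}g_{jj})^{-1/2}$ out of the row indexed by $ij$ produces the scalar $\prod_i g_{ii}^{-3/2}=\gamma^{-3}$; then, exactly as in the triangle case, suitable row operations (to clear the $g_{ii}$ in the denominators) and a conjugation by a diagonal matrix built from the $g_{ii}$ should bring the remaining matrix to a form whose entries are expressible through the $y_{ij}$, whose determinant is then expected to collapse — via the Jacobi cofactor identity $g_{ii}g_{jj}-g_{ij}^{2}=d\,(1-x_{km}^{2})$ (equivalently, the product forms (\ref{eq: sine law prod 1}) of the triangle sine law applied to the link and face triangles) together with (\ref{67})--(\ref{68}) — to the required $(d/\gamma)^{5}$.

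\emph{Main obstacle and alternatives.} The hard part is precisely this $6\times6$ determinant with square-root denominators: the bookkeeping is considerably heavier than in the $3\times3$ triangle case, and the real work is organizing the row operations and conjugations so that the reduced matrix becomes recognizable. Two conceptual cross-checks I would run in parallel. First, the Schl\"afli formula: in the \emph{angle/length} variables the map $\alpha\mapsto\ell$ is, up to a constant and the coordinate permutation $ij\leftrightarrow\overline{ij}$, the gradient of the volume $V$ of the spherical tetrahedron regarded as a function of its dihedral angles, so that its Jacobian is $\det\mathrm{Hess}\,V$; passing to the cosine variables multiplies this by $\prod_{ij}\sin\ell_{ij}/\prod_{ij}\sin\alpha_{ij}=(d/\gamma)^{3}$, so the claim is equivalent to an identity expressing $\det\mathrm{Hess}\,V$ through $(d/\gamma)^{2}$, which is intimately tied to the tetrahedral sine law. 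Second, the factorization of $\psi$ used in the proof of Theorem~\ref{Th: 4D}: $\psi$ is the composition of two ``batches'' of applications of $\Phi$, one to the four link triangles $\mathrm{Lk}(m)$ and one to the four face triangles $\Delta(ijk)$, so that the transformation of $\omega$ can be tracked one batch at a time. Either route reduces to the same assertion — that the Jacobian of $\psi$ equals the fifth power of $d/\gamma=\gamma'/d'$.
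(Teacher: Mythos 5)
Your treatment of the integrals is complete and coincides with the paper's (which compresses it to the single sentence that the claim ``follows directly from the sine laws (\ref{67}--\ref{68})''); in particular your use of (\ref{67}) to cancel the common factor $d/\gamma$ from numerator and denominator of the last two integrals is exactly right. The gap is in the volume form. Everything after ``suitable row operations and a conjugation \dots\ should bring the remaining matrix to a form whose entries are expressible through the $y_{ij}$, whose determinant is then expected to collapse'' is precisely the content of the proof, and you have not supplied it. Three concrete ingredients are missing. First, a closed-form expression for the mixed partial $\partial y_{ij}/\partial x_{km}$ with respect to the \emph{opposite} edge variable: your chain-rule formula is correct but unusable as written; the paper quotes from \cite{Luo3} the expression (\ref{eq: ijkm}) involving $p^{ij}_{km}=\bigl(y_{ik}y_{jm}+y_{im}y_{jk}-y_{ij}y_{ik}y_{im}-y_{ij}y_{jk}y_{jm}\bigr)/(1-y_{ij}^2)$, and without it the conjugated matrix cannot even be written down in the $y$-variables. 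Second, the resulting $6\times 6$ determinant does not collapse by inspection; the paper evaluates it by a symbolic (Maple) computation and obtains the factorization $g'_{11}g'_{22}g'_{33}g'_{44}\,d'/\prod_{i<j}(1-y_{ij}^2)$. Third, converting that into $(\gamma'/d')^5=(d/\gamma)^5$ requires the separate identity (\ref{eq:ggs}), namely $(g'_{11}g'_{22}g'_{33}g'_{44})^{3/2}/(g_{11}g_{22}g_{33}g_{44})^{3/2}=\prod_{i<j}(1-y_{ij}^2)^2/\prod_{k<m}(1-x_{km}^2)^2$, which the paper proves by a genuinely geometric argument: $g_{ii}$ (resp.\ $g'_{ii}$) is the triangle discriminant $d$ (resp.\ $d'$) of the link triangle Lk($i$) (resp.\ of the face triangle $\Delta(j,k,m)$), so the product forms (\ref{eq: sine law prod 1}) apply, and the products telescope because each dihedral angle is a planar angle of two links and each edge is shared by two faces. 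You gesture at this last idea (``the product forms of the triangle sine law applied to the link and face triangles''), but it is not developed into the identity that is actually needed.

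Your two cross-checks are sound and consistent with the paper: the Schl\"afli/Hessian connection appears there as a remark \emph{derived from} the proof rather than as a route to it, and the factorization of $\psi$ through the link and face triangles is indeed what underlies (\ref{eq:ggs}). But neither cross-check substitutes for the determinant evaluation, which is where the proof actually lives.
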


\begin{proof} The claim about integrals of motion follows directly from the sine laws (\ref{67}--\ref{68}). Let us turn to the proof of the claim about the invariant volume form. Our aim is to compute the determinant of the Jacobi matrix $(\partial y_{ij}/\partial x_{i'j'})$. We use the following ordering of the index pairs: 12, 13, 23, 14, 24, 34. The entries of the Jacobi matrix are given by the following expressions:
\begin{eqnarray}
\frac{\partial y_{ij}}{\partial x_{ij}} & = & \frac{1-x_{km}^2}{\sqrt{g_{ii}g_{jj}}}, \label{eq: ijij}\\
\frac{\partial y_{ij}}{\partial x_{ik}} & = &
\frac{1-x_{km}^2}{\sqrt{g_{ii}g_{jj}}}\, y_{jk}\, \sqrt{\frac{g_{kk}}{g_{jj}}}, \label{eq: ijik}\\
\frac{\partial y_{ij}}{\partial x_{km}} & = &
\frac{1-x_{km}^2}{\sqrt{g_{ii}g_{jj}}}\,
p_{km}^{ij}\, \sqrt{\frac{g_{kk}g_{mm}}{g_{ii}g_{jj}}}, \label{eq: ijkm}
\end{eqnarray}
where
\beq \nn
p_{km}^{ij}=\frac{y_{ik}y_{jm}+y_{im}y_{jk}-y_{ij}y_{ik}y_{im}-y_{ij}y_{jk}y_{jm}}{1-y_{ij}^2}.
\eeq
Indeed, the first two expressions, (\ref{eq: ijij}--\ref{eq: ijik}), are easily obtained by a direct computation, while the third one, (\ref{eq: ijkm}), is a result of \cite{Luo3}. The last factors in the above expressions can be interpreted as a conjugation of the Jacobi matrix $(\partial y_{ij}/\partial x_{i'j'})$ by a diagonal matrix
\[
{\rm diag}\left(1,\sqrt{\frac{g_{33}}{g_{22}}},\sqrt{\frac{g_{33}}{g_{11}}},\sqrt{\frac{g_{44}}{g_{22}}},
\sqrt{\frac{g_{44}}{g_{11}}},\sqrt{\frac{g_{33}g_{44}}{g_{11}g_{22}}}\right),
\]
and are therefore inessential for the computation of the determinant. We come to the following result:
$$
\det\left(\frac{\partial y}{\partial x}\right)={\displaystyle{
\frac{\prod_{k<m} (1-x_{km}^2)}{(g_{11}g_{22}g_{33}g_{44})^{3/2}}
}}
\left|\begin{array}{cccccc}
1 & y_{23} & y_{13} & y_{24} & y_{14} & p_{12}^{34} \\
y_{23} & 1 & y_{12} & y_{34} & p_{13}^{24} & y_{14} \\
y_{13} & y_{12} & 1 & p_{23}^{14} & y_{34} & y_{24} \\
y_{24} & y_{34} & p_{14}^{23} & 1 & y_{12} & y_{13} \\
y_{14} & p_{24}^{13} & y_{34} & y_{12} & 1 & y_{23} \\
p_{34}^{12} & y_{14} & y_{24} & y_{13} & y_{23} & 1
\end{array}\right|.
$$
A direct computation with a symbolic manipulation system like Maple leads to the following factorized expression for the determinant on the right-hand side:
$$
\left|\begin{array}{cccccc}
1 & y_{23} & y_{13} & y_{24} & y_{14} & p_{12}^{34} \\
y_{23} & 1 & y_{12} & y_{34} & p_{13}^{24} & y_{14} \\
y_{13} & y_{12} & 1 & p_{23}^{14} & y_{34} & y_{24} \\
y_{24} & y_{34} & p_{14}^{23} & 1 & y_{12} & y_{13} \\
y_{14} & p_{24}^{13} & y_{34} & y_{12} & 1 & y_{23} \\
p_{34}^{12} & y_{14} & y_{24} & y_{13} & y_{23} & 1
\end{array}\right|=\frac{g'_{11}g'_{22}g'_{33}g'_{44}}{\prod_{i<j} (1-y_{ij}^2)} \, d'.
$$
Therefore,
$$
\det\left(\frac{\partial y}{\partial x}\right)=
\frac{\prod_{k<m} (1-x_{km}^2)}{\prod_{i<j} (1-y_{ij}^2)}
\,\frac{(g'_{11}g'_{22}g'_{33}g'_{44})^{3/2}}{(g_{11}g_{22}g_{33}g_{44})^{3/2}}
\,\frac{d'}{\gamma'}.
$$
At the next step, we use the following result (proved below):
\begin{equation} 
\label{eq:ggs}
\frac{(g'_{11}g'_{22}g'_{33}g'_{44})^{3/2}}{(g_{11}g_{22}g_{33}g_{44})^{3/2}}=
\frac{\prod_{i<j} (1-y_{ij}^2)^2}{\prod_{k<m} (1-x_{km}^2)^2}.
\end{equation}
As a consequence, we find with the help of (\ref{67}):
$$
\det\left(\frac{\partial y}{\partial x}\right)=
\frac{\prod_{i<j} (1-y_{ij}^2)}{\prod_{k<m} (1-x_{km}^2)}\,\frac{d'}{\gamma'}=\left(\frac{\gamma'}{d'}\right)^6\,
\frac{d'}{\gamma'}=\left(\frac{\gamma'}{d'}\right)^5,
$$
and one more reference to relations (\ref{67}) proves the claim about the invariant volume form.

To finish the proof we need to prove formula (\ref{eq:ggs}). Comparing expression (\ref{po rat 1}) with (\ref{sph tetr c2}), we see that $g_{ii}$ is the analogue of the quantity $d$ for the spherical triangle with the angles $\alpha_{jk}$, $\alpha_{jm}$, $\alpha_{km}$, i.e., for the triangle Lk($i$). Then, from (\ref{eq: sine law prod 1}), we obtain:
\[
g_{ii}^{1/2}=\sin\alpha_{jk}^{(i)}\sin\alpha_{jm}\sin\alpha_{km}.
\]
Multiplying all three such expressions for Lk($i$), we find:
\[
g_{ii}^{3/2}=\sin\alpha_{jk}^{(i)}\sin\alpha_{jm}^{(i)}\sin\alpha_{km}^{(i)}
\sin^2\alpha_{jk}\sin^2\alpha_{jm}\sin^2\alpha_{km}.
\]
Similarly, $g'_{ii}$ is the analogue of the quantity $d'$ for the spherical triangle with the sides $\ell_{jk}$, $\ell_{jm}$, $\ell_{km}$, i.e., for the triangle $\Delta(j,k,m)$. From (\ref{eq: sine law prod 1}), we obtain:
\[
(g'_{ii})^{1/2}=\sin\alpha_{jk}^{(m)}\sin\ell_{jm}\sin\ell_{km}.
\]
Multiplying all three such expressions for $\Delta(j,k,m)$, we find:
\[
(g'_{ii})^{3/2}=\sin\alpha_{jk}^{(m)}\sin\alpha_{jm}^{(k)}\sin\alpha_{km}^{(j)}
\sin^2\ell_{jk}\sin^2\ell_{jm}\sin^2\ell_{km}.
\]
As a consequence,
\begin{eqnarray*}
(g_{11}g_{22}g_{33}g_{44})^{3/2} & = & \left( \prod_{i=1}^4 \prod_{j<k}\sin\alpha_{jk}^{(i)}\right) \prod_{j<k}\sin^4\alpha_{jk},\\
(g'_{11}g'_{22}g'_{33}g'_{44})^{3/2} & = & \left( \prod_{m=1}^4 \prod_{j<k}\sin\alpha_{jk}^{(m)} \right) \prod_{j<k}\sin^4\ell_{jk}.
\end{eqnarray*}
Indeed, each dihedral angle $\alpha_{jk}$ participates as a planar angle of two triangles Lk($i$) and Lk($m$), and each side $\ell_{jk}$ is shared by two triangles $\Delta(j,k,m)$ and $\Delta(i,j,k)$. This  proves (\ref{eq:ggs}). 
\end{proof}

{\bf Remark.} According to the famous Schl\"afli formula (see, e.g., \cite{V, Luo3}), if $V=V(\alpha)$ is the volume of the spherical tetrahedron with the dihedral angles $\alpha_{ij}$, then $dV=\frac{1}{2}\sum_{ij}\ell_{km}d\alpha_{ij}$. Thus, our proof of Theorem \ref{Th 2} actually leads to an evaluation of the determinant of the Hesse matrix of the volume function, $\det(\partial^2 V/\partial \alpha_{ij}\partial\alpha_{i'j'})$. We are not aware of such an evaluation in the literature.
\medskip

{\bf Remark.} The reduction of the system of ordinary differential equations (\ref{two tops coupled}) to two copies of the Euler top (\ref{two tops uncoupled}) yields that the original system admits two invariant 3-forms, $dp_{12}\wedge dp_{13}\wedge dp_{23}$ and $dq_{12}\wedge dq_{13}\wedge dq_{23}$. We conjecture that the discretization $\psi$ also admits an invariant 3-form. This property is stronger than existence of an invariant volume form established in Theorem \ref{Th 2}. If true, this 3-form (or, better, the corresponding 3-vector field) would yield, by contraction with gradients of integrals of motion, a family of compatible Poisson structures for $\psi$.

\section*{Acknowledgment}
The authors are partly supported by DFG (Deutsche Forschungsgemeinschaft) in the frame of Sonderforschungsbereich/Transregio 109 ``Discretization in Geometry and Dynamics''.


\end{document}